\documentclass[journal]{IEEEtran}

\IEEEoverridecommandlockouts
%
\usepackage{amsthm}
\usepackage{wrapfig}
\usepackage{amsthm}
\usepackage{mathrsfs}
\usepackage{graphicx}
\usepackage{graphics}
\usepackage{amssymb}
\usepackage{amsmath,mathtools}
\usepackage{color}
\usepackage{xspace}
\usepackage{algpseudocode}
\usepackage{bbm}
\usepackage{comment}
\usepackage{cite}
\usepackage{algorithmicx}
\usepackage{algorithm}
\usepackage{psfrag}
\usepackage{url}
\usepackage{float}
\usepackage{subfig}
\usepackage{soul}

\usepackage{rotating}
\usepackage{chngcntr}
\usepackage{apptools}
\usepackage{graphicx}
\usepackage{graphics}
\usepackage{amssymb}
\usepackage{mathtools}
\usepackage{color}
\usepackage{xspace}
\usepackage{algpseudocode}
\usepackage{bbm}
\usepackage{comment}
\usepackage{algorithmicx}
\usepackage{psfrag}

\usepackage{amsmath}
\usepackage{upgreek}
\allowdisplaybreaks[4]
\usepackage{tikz}
\usetikzlibrary{shapes,arrows}
\usetikzlibrary{positioning}

\tikzstyle{block}=[draw opacity=0.7,line width=1.4cm]
\DeclareMathAlphabet{\mathpzc}{OT1}{pzc}{m}{it}
\definecolor{CranJ}{cmyk}{0,0.69,0.54,0.04} 
\definecolor{PinkJ}{cmyk}{0,0.71,0.43,0.12} 
\definecolor{Cran}{cmyk}{0,0.73,0.41,0.29} 
\definecolor{VRed}{cmyk}{0,0.75,0.25,0.2} 
\definecolor{ORed}{cmyk}{0,0.75,0.75,0} 
\definecolor{CBlue}{cmyk}{1,0.25,0,0} 


 


\newcommand{\KLD}{\operatorname{D}_{\text{KL}}}
\newcommand{\tr}{\operatorname{tr}}
\newcommand{\Con}{\operatorname{Con}}
\newcommand{\ee}{\operatorname{e}}

\newcommand{\VV}{\mathcal{V}}
\newcommand{\EE}{\mathcal{E}}
\newcommand{\GG}{\mathcal{G}}


\newcommand{\real}{{\mathbb{R}}}
\newcommand{\reals}{{\mathbb{R}}}

\newcommand{\realpositive}{{\mathbb{R}}_{>0}}
\newcommand{\realnonnegative}{{\mathbb{R}}_{\ge 0}}



\parindent 0pt
\parskip .9ex

\newcommand{\vect}[1]{\boldsymbol{\mathbf{#1}}}

\newcommand{\dvect}[1]{\dot{\vect{#1}}}

\newcommand{\boxend}{\hfill \ensuremath{\Box}}

\newtheorem{thm}{Theorem}

\newtheorem{rem}[thm]{Remark}


\newcommand{\oprocendsymbol}{\hbox{$\bullet$}}
\newcommand{\oprocend}{\relax\ifmmode\else\unskip\hfill\fi\oprocendsymbol}


\title{\LARGE \bf A distributed service-matching coverage via\\ heterogeneous mobile agents 
} 

\author{Yi-Fan Chung and Solmaz S. Kia, \emph{Senior Member, IEEE} 
\thanks{The authors are with the Department of Mechanical and Aerospace Engineering, University of California, Irvine, Irvine, CA 92697, {\tt \{yfchung,solmaz\}@uci.edu}. This work is supported by NSF award IIS-SAS-1724331.}}

\begin{document}
\maketitle
\begin{abstract}
We propose a distributed deployment solution for a group of mobile agents that should provide a service for a dense set of targets. The agents are heterogeneous in a sense that their quality of service (QoS), modeled as a spatial Gaussian distribution, is different. To provide the best service, the objective is to deploy the agents such that their collective QoS distribution is as close as possible to the density distribution of the targets. We propose a distributed consensus-based expectation-maximization (EM) algorithm to estimate the target density distribution, modeled as a Gaussian mixture model (GMM). The GMM not only gives an estimate of the targets’ distribution, but also partitions the area to subregions, each of which is represented by one of the GMM’s Gaussian bases. We use the Kullback-Leibler divergence (KLD) to evaluate the similarity between the QoS distribution of each agent and each Gaussian basis/subregion. Then, a distributed assignment problem is formulated and solved as a discrete optimal mass transport problem that allocates each agent to a subregion by taking the KLD as the assignment cost. We demonstrate our results by a sensor deployment for event detection where the sensor's QoS is modeled as an anisotropic Gaussian distribution.

\end{abstract}



Over the last decade, deploying a group of networked mobile agents to cover a region with a service objective such monitoring, data collection, and wireless communication have attracted considerable attention,  
see for a few examples\cite{KM-Ck-HB:11,FA-MAN-MHJ-GMC-TAJ-KSB:14,SS-BF-MS:08}. The deployment strategy commonly includes partitioning the environment and allocating agents to those partitions. That is, the area of interest is partitioned into subregions and each agent is allocated to a location in the subregion such that some coverage metric is optimized. 

The classic Voronoi-based deployment strategy~\cite{CJ-MS-KT-BF:04,BF-CJ-MS:09,WG-CG-LP-TF:06,PL-KV-MR-PG:08,PA-FLC-PL-SM:15,AO-KDE:16,LK-KJ:09,FF-ZX-CX-ZT:17,GA-HS-HT-FM:08,SM-RD-SJ:09,CA-TM-CR-SL-PG:15,SF-MM-ZY-GBW:15} is a prime example of multi-agent deployment for area coverage.~\cite{CJ-MS-KT-BF:04} as one of the initial work in this area develops a deployment algorithm based on the Llyod method to compute the Voronoi partition and allocate the agents to the Centroidal Voronoi configuration which is well-known as the optimal configuration of a class of locational optimization cost function~\cite{BF-CJ-MS:09}. 
The original Voronoi-based deployment strategy is developed for homogeneous agents. To reach the optimal coverage with heterogeneous agents whose service capabilities are different,~\cite{PL-KV-MR-PG:08,PA-FLC-PL-SM:15,AO-KDE:16} employ the weighted Voronoi diagram where the weightings account for heterogeneity among the agents. 
The works mention above assume the footprint of the service provided by an agent is disk-shaped, i.e., the distribution of QoS is isotropic. However, an anisotropic service model is more realistic because sensory systems such as cameras, directional antenna, and radars are anisotropic. \cite{LK-KJ:09,FF-ZX-CX-ZT:17} for wedge-shape and \cite{GA-HS-HT-FM:08} for elliptic footprint adapt an anisotropic service model by the modifying Voronoi diagrams to match the features of the anisotropy of the sensors. But these methods increase the complexity of the Voronoi partition, which make the design of distributed optimal deployment strategies very challenging.
The heterogeneity in deployment algorithm design can also be due to non-uniformity in area of interest.  To deal with such scenarios, a priority (sensory) function of the position is introduced to indicate the importance level over the area, where a location needs higher QoS if the value of the priority function is higher at that location. 
The work~\cite{CJ-MS-KT-BF:04,BF-CJ-MS:09,WG-CG-LP-TF:06,PL-KV-MR-PG:08,PA-FLC-PL-SM:15,AO-KDE:16,LK-KJ:09,FF-ZX-CX-ZT:17,GA-HS-HT-FM:08} mentioned  above assume the priority function is known to each agent. This assumption may not be realistic for every application. 
\cite{SM-RD-SJ:09} uses the parameterized basis functions to model the priority distribution, and \cite{CA-TM-CR-SL-PG:15} models the distribution by a zero-mean Gaussian random field. Then, in both~\cite{SM-RD-SJ:09} and~\cite{CA-TM-CR-SL-PG:15}, the agents gradually fit their model to the true distribution using their local sensor measurement while exploring the area. In~\cite{SF-MM-ZY-GBW:15}, the authors assume the unknown priority function is a function of the position of some unknown targets. The search agents aim to detect the targets while exploring the area, and then, broadcast their information about the environment to the service agents so the service agents can focus on the deployment problem.

In this paper, we propose a novel deployment strategy for mobile agents to cover a collection of dense targets with their heterogeneous anisotropic services. We model the  agents' QoS distributions by Gaussian distributions with difference covariance matrix reflecting the difference in the capabilities of the agents. Since the footprint of Gaussian distribution is elliptic, the agents' QoS are heterogeneous and anisotropic. Furthermore, the density distribution of the targets is considered as the priority function which is unknown a priori. Our deployment objective is deploying the agents such that the resulting QoS distribution of agents is similar to the density distribution of the targets. Hence, the agents' service efficiently covers the targets; that is, the place containing more (resp. fewer) targets is served with higher (resp. lower) QoS.

We model the unknown density distribution of the targets by a GMM. We propose a distributed consensus-based EM algorithm to enable the agents to learn the parameters of the GMM. With the proposed distributed EM, we do not require each agent to measure the targets locally. Only a subset of the agents can measure the targets and their information is then propagated through the consensus protocol to all agents.
Moreover, the GMM intrinsically partitions the area into a set of subregions, each of which represents a Gaussian basis. Therefore, after estimating the target density distribution, the agents also complete the area partitioning task. We note that unlike the distributed Voronoi partition that requires the agents to be able to communicate to their Voronoi neighbors, which at times can be unrealistic because the physical distance between Voronoi neighbors may but them outside of the communication range of each other, our approach only requires the communication graph among the agents to be connected. We use the KLD measure to assess the similarity of the QoS provided by the agents and the targets' density distributions. We propose to obtain the optimal deployment pose (position and orientation) of the service agents by minimizing this KLD measure. Since this KLD measure is highly coupled and computing a distributed solution for it is challenging, we propose a suboptimal deployment solution in the form of an optimal mass transport problem to allocate each agent to a Gaussian basis subregion of the GMM used to estimate the targets' distribution, where the cost of transporting the agent to a subregion is the KLD value between the agent's QoS distribution and the Gaussian basis distribution. We show that this assignment problem is a distributed linear programming that can be solved efficiently by the distributed simplex algorithm of~\cite{BM-NG-BF-AF:12}.
We illustrate our results via an application in the deployments of sensor network for event~detection.

\section{Notations and Preliminaries}\label{sec::notaion_prelim_relay}
\vspace{-0.08in} 
We let $\reals$, $\realpositive$, $\realnonnegative$, $\mathbb{Z}$, $\mathbb{Z}_{> 0}$ and $\mathbb{Z}_{\geq 0}$
denote the set of real, positive real, non-negative real, integer, positive integer, and non-negative integer, respectively. For $\vect{s}\in\reals^d$,
$\|\vect{s}\|=\sqrt{\vect{s}^\top\vect{s}}$ denotes the standard
Euclidean norm. We let
$\vect{1}_n$ (resp. $\vect{0}_{n}$) denote the vector of $n$ ones
(resp. $n$ zeros), and $\vect{I}_n$ denote the $n\times n$ identity
matrix.  Given two continuous probability density distributions $p(\vect{x})$ and $q(\vect{x})$, $\vect{x}\in\mathbb{X}$, the \emph{Kullback–Leibler divergence} (KLD) is defined as $\KLD\big(p(\vect{x})||q(\vect{x})\big)=\int_{\vect{x}\in\mathbb{X}}p(\vect{x})\ln \frac{p(\vect{x})}{q(\vect{x})}d\vect{x}.$
KLD is a  measure of similarity (dissimilarity) between two probability distributions $p(\vect{x})$ and $q(\vect{x})$, where the smaller the value the more similar two distributions are. KLD is zero if and only if the two distribution are identical \cite[p.34]{MDJC:03}.  For Gaussian distributions, $p(\vect{x})=\mathcal{N}(\vect{\mu}_0,\vect{\Sigma}_0)$ and  $q(\vect{x})=\mathcal{N}(\vect{\mu}_1,\vect{\Sigma}_1)$, the KLD has a closed form expression~\cite[eq. (2)]{HJR-ORA:07}
\begin{align}\label{eq::close_KLD}
\KLD \Big(p(\vect{x})||q(\vect{x})\big)=& \frac{1}{2}\big(\ln\frac{|\vect{\Sigma}_1|}{|\vect{\Sigma}_0|}+(\vect{\mu}_0-\vect{\mu}_1)^\top\vect{\Sigma}_1^{-1}(\vect{\mu}_0-\vect{\mu}_1)\nonumber\\&\quad+\tr(\vect{\Sigma}_1^{-1}\vect{\Sigma}_0)-n \Big),
\end{align}
where $n$ is the dimension of the distributions.

We follow~\cite{FB-JC-SM:09} for Our graph theoretic notations and definitions. 
A \emph{graph}, is a triplet $\GG = (\VV ,\EE,
\vect{\sf{A}})$,~where $\VV=\{1,\dots,N\}$ is the \emph{node set} and
$\EE \subseteq \VV\times \VV$ is the \emph{edge set}, and $\vect{\sf{A}}\in\real^{N\times N}$ is a \emph{adjacency}
matrix such that $ \sf{a}_{ij} =1$ if $(i, j) \in\EE$ and $
\mathsf{a}_{ij} = 0$, otherwise.  
An edge $(i, j)$ from $i$ to $j$ means that agents $i$ and $j$ can communicate. 
A \emph{path} is a sequence of nodes
connected by edges. A \emph{connected graph} is an undirected graph in which for
every pair of nodes there is a path connecting them. 

To develop our distributed density estimator in Section~\ref{sec::stage1}, we rely on the \emph{dynamics active weighted average consensus algorithm} that is shown in Algorithm~\ref{alg::consensus}. In a dynamics active weighted average consensus at any time, only a subset of the agents are active, meaning that only a subset of agents collects measurements $\mathsf{r}^i$. The objective then is to enable all the agents, both active and passive, to obtain the weighted average of the collected measurements,  $\frac{\sum_i\eta^i(l) \mathsf{r}^i(l)}{\sum_i \eta^i(l)}$, without knowing the set of active agents. Here, $ \eta^i(l)=0$ if $i$ is passive at time step $l$ and $ \eta^i(l)\in\real_{>0}$ if $i$ is active. 
\cite{YC-SS:20} shows that Algorithm~\ref{alg::consensus}, starting at any $z^i,y^i(0)\in\real$, makes $\vect{y}^i(l)$ track the time varying weighted average signal $\frac{\sum_i\eta^i(l) \mathsf{r}^i(l)}{\sum_i \eta^i(l)}$ with a bounded tracking error as $l\to\infty$. Moreover, if the weights and reference signals are static, the tracking error vanishes with time, i.e., $\lim_{l\to\infty}\vect{y}^i(l)=\frac{\sum_i\eta^i \mathsf{r}^i}{\sum_i \eta^i}$.


\begin{algorithm}[t]
{\small
\caption{Active weighted average consensus algorithm~\cite{YC-SS:20} }\label{alg::consensus}
\footnotesize
$[\vect{y}^i,\vect{z}^i,\vect{v}^i]\gets \Con(\eta^i,\vect{\mathsf{r}}^i,\vect{z}^i_0,\vect{v}^i_0)$\\
\begin{algorithmic}
\Require Weight $\eta^i$, reference $\vect{\mathsf{r}}^i$, number of loops $L$, a small enough number $\delta_c>0$. \\
\textbf{Initialization:} $\vect{z}^i(1)=\vect{z}^i_0$ and $\vect{v}^i(1)=\vect{v}^i_0$

\For{$l=1:L$}
\begin{align*}
 \vect{y}^i(l)&=\vect{z}^i(l)+\eta^i(l) \vect{\mathsf{r}}^i(l),\\
 \vect{z}^i(l+1)&=\vect{z}^i(l)-\delta_c\eta^i(l)(\vect{y}^i(l)-\vect{\mathsf{r}}^i(l))\\
 &\!\!\!\!\!\!\!\!\!\!\!\!\!\!\!\!\!\!\!\!-\delta_c\sum_{j=1}^N\mathsf{a}_{ij}(\vect{y}^i(l)-\vect{y}^j(l))-\delta_c \sum_{j=1}^N\mathsf{a}_{ij}(\vect{v}^i(l)-\vect{v}^j(l)),\nonumber\\
 \vect{v}^i(l+1)&=\vect{v}^i(l)+\delta_c\sum\nolimits_{j=1}^N\mathsf{a}_{ij}(\vect{y}^i(l)-\vect{y}^j(l)).
\end{align*}
\EndFor\\
\Return $\vect{y}^i(l),\vect{z}^i(l+1),\vect{v}^i(l+1)$
\end{algorithmic}
}
\end{algorithm}


\section{Problem Definition and Objective}\label{sec::prob_def}
\vspace{-0.08in} 
We consider a deployment problem for a group of mobile agents over a set of dense targets $\{\vect{x}_t^n\}_{n=1}^M\subset \real^2$ on a planar ground with the objectives such as event detection,  wireless communication or monitoring, which we refer to it in general term as providing a `service'. The probability density distribution $p(\vect{x})$, $\vect{x}\in\real^2$, which is unknown to the agents, represents the density distribution of the targets. The mobile agents communicating over a connected undirected graph $\GG = (\VV ,\EE,
\vect{\sf{A}})$ consist of two types. There are a set $\VV_{\textup{a}}\subseteq\VV$ of \emph{active} agents that have the capability to actively detect the targets and a set $\VV_{\textup{s}}=\{1,\cdots,N\}\subseteq\VV$ \emph{service} agents that are deployed to provide the targets with a service, see Fig~\ref{fig::monitoring}. Unlike some existing literature like~\cite{SF-MM-ZY-GBW:15}, $\VV_{\textup{a}}$ and $\VV_{\textup{s}}$ do not have to be mutually exclusive. We assume that the active agents have partitioned the area such that each target is detected only by one active agent, i.e., no overlapping~detection.

We let $(\vect{x}^i_{\textup{s}},\theta^i_{\textup{s}})\in\real^2\times[0,2\uppi]$ be the pose (position and orientation) of service agent $i\in\VV_{\textup{s}}$. 
The QoS distribution $Q^i(\vect{x}|\vect{x}^i_{\textup{s}},\theta^i_{\textup{s}})$ provided by a service agent $i\in\VV_{\textup{s}}$ is modeled by a scaled Gaussian probability distribution $Q^i(\vect{x}|\vect{x}^i_{\textup{s}},\theta^i_{\textup{s}})=z^i \mathcal{N}(\vect{x}|\vect{x}^i_{\textup{s}},\Sigma(\theta^i_{\textup{s}}))$, $\vect{x}\in\real^2$, where  $z^i\in\realpositive$ is the scale constant  and
$\mathcal{N}(\vect{x}|\vect{x}^i_{\textup{s}},\Sigma(\theta^i_{\textup{s}}))$ is the Gaussian distribution with the mean $\vect{x}^i_{\textup{s}}$ and the covariance matrix $\Sigma(\theta^i_{\textup{s}})$.
We define the normalized collective QoS provided by the service agents by the probability density distribution
\begin{align}&q(\vect{x}|\{\vect{x}^i_{\textup{s}},\theta^i_{\textup{s}}\}_{i\in\VV_{\textup{s}}})=\frac{\sum_{i\in\VV_{\textup{s}}} Q^i}{\int_{\vect{x\in\real^2}}\sum_{i\in\VV_{\textup{s}}} Q^i d\vect{x}}\nonumber\\
&=\frac{\sum_{i\in\VV_{\textup{s}}} z^i \mathcal{N}(\vect{x}|\vect{x}^i_{\textup{s}},\Sigma(\theta^i_{\textup{s}}))}{\sum_{i\in\VV_{\textup{s}}} z^i}=\sum_{i\in\VV_{\textup{s}}} \omega^i_{\textup{s}} \mathcal{N}(\vect{x}|\vect{x}^i_{\textup{s}},\Sigma(\theta^i_{\textup{s}})),\end{align} where $\omega^i_{\textup{s}}=\frac{z^i}{\sum_{i\in\VV_{\textup{s}}} z^i}$ represents the relative service capability of agent $i$ among $\VV_{\textup{s}}$. 

\begin{figure}
  \centering
  \includegraphics[width=0.38\textwidth]{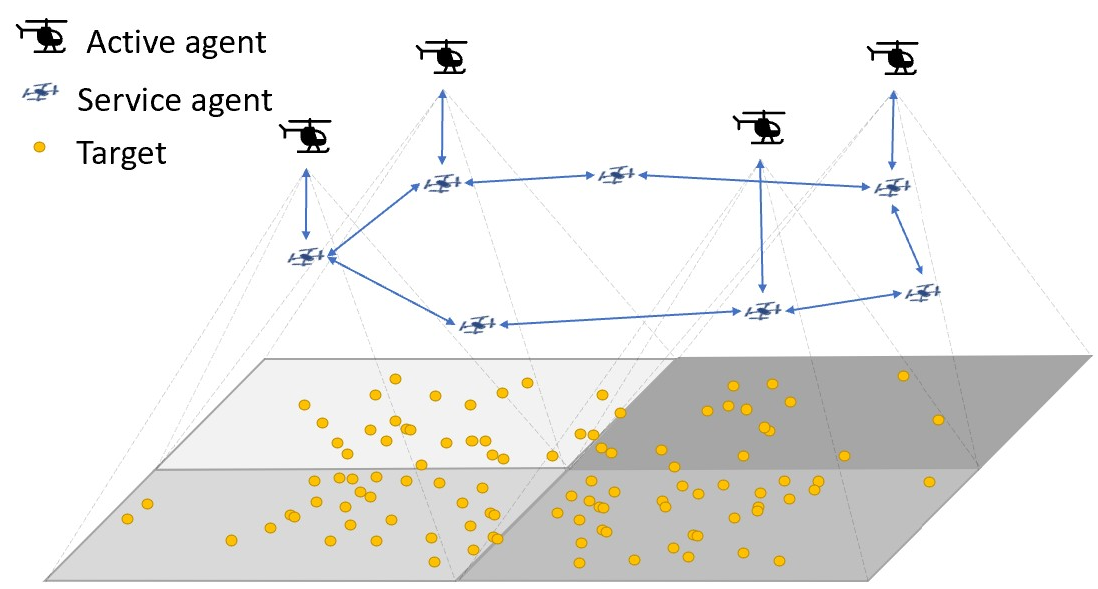}
  \caption{{\small A multi-agent system with active agents and service~agents.}}
  \label{fig::monitoring}
\end{figure}

Our objective in this paper is to first enable all the agents, both active and service agents, obtain an estimate mixture model $\hat{p}(\vect{x})$ of the density distribution of the targets in a distributed manner. Then, design a distributed deployment strategy to re-position the service agents in a way that their collective QoS serves the targets in an efficient manner. In other words, we seek locations and orientations for service agents such that the collective QoS distribution $q$ is as much similar to as possible to the estimated target density distribution $\hat{p}$. The optimal solution for the deployment objective can be obtained~from 
\begin{align}\label{eq::objective}
   \{\vect{x}_{\textup{s}}^i,\theta_{\textup{s}}^i\}_{i\in\VV_{\textup{s}}}= \arg\min\KLD\big(\hat{p}(\vect{x})||q(\vect{x})\big).
\end{align} 
We note that $\hat{p}(\vect{x})$ and $q(\vect{x})$ are mixture distributions for which obtaining a closed-from for their KLD is quite challenging. In practice, KLD for mixture models are usually estimated by using costly Monte-Carlo sampling simulations~\cite{HJR-OPA:07}. Moreover, the collective QoS distribution $q(\vect{x})$ contributed by each agent's QoS distribution, $\omega_{\textup{s}}^i \mathcal{N}(\vect{x}|\vect{x}^i_{\textup{s}},\Sigma(\theta^i_{\textup{s}}))$, $i\in\VV_{\textup{s}}$, is a global information. Accordingly, designing a distributed solver for~\eqref{eq::objective} is challenging. Therefore, in this paper, we seek a suboptimal solution for~\eqref{eq::objective} that can be implemented in a distributed manner and has low computational complexity.

\section{Overview of the Proposed Mobile Agent Deployment Solution}
Our proposed distributed solution to meet our objective stated in Section~\ref{sec::prob_def} is the two-stage process depicted in Fig.~\ref{fig::plan}. In the first stage, we use a GMM with $N$  Gaussian bases to model the target density distribution. The active agents $\VV_{\textup{a}}$ detect the positions of the targets, considered as the sampled data from the unknown distribution $p(\vect{x})$. Then, a distributed EM algorithm, which uses a set of active weighted average consensus algorithms, is used to enable both active and service agents obtain a coherent estimate of the parameters of the $N$  Gaussian bases of the GMM. The Gaussian bases of the GMM partition the target area into $N$ subregions each of which corresponds to a Gaussian basis.  The second stage of our solution is an agent allocation process which follows an optimal mass transport framework. In this allocation process,  first each service agent $i\in\mathcal{V}_{\textup{s}}$ computes the KLD between its QoS distribution, $\omega^i_{\textup{s}} \mathcal{N}(\vect{x}|\vect{x}^i_{\textup{s}},\Sigma(\theta^i_{\textup{s}}))$, and each subregion's Gaussian basis obtained in stage 1. Then, a distributed assignment problem is formulated with the KLDs as the cost of deploying the agent to the each respective subregion. As a result, each agent is paired with a subregion and the summation of the divergences corresponding to each paired agent's QoS distribution and subregion's Gaussian basis is minimized. The last step in this stage is a transportation process in which we implement a local controller to drive the agents to their assigned destinations in finite time. For dynamic targets, the process repeats. We present the details of each stage in the following sections.

\begin{figure}
  \centering
  \includegraphics[width=0.38\textwidth]{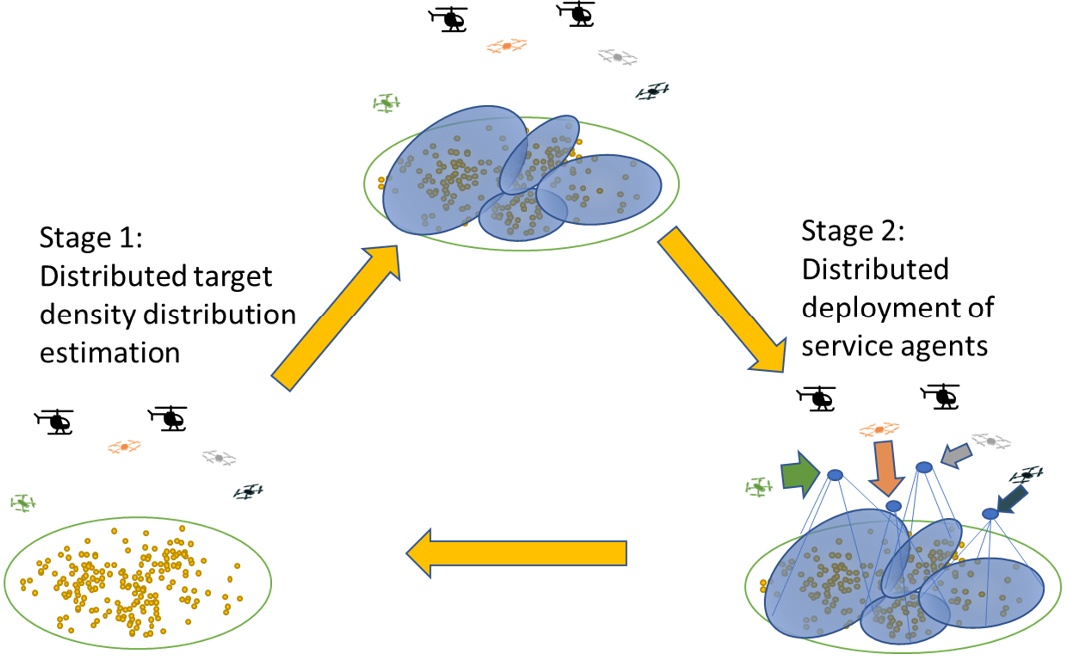}
  \caption{
  {\small The proposed two-stage distributed deployment solution.}  }
  \label{fig::plan}
\end{figure}

\section{Stage 1: distributed target density distribution estimation}\label{sec::stage1}


GMM is characterized by finite sum of Gaussian bases with different weights, means and covariance  matrices. Let $\vect{x}\in\real^2$ be the observed target's position drawn from a mixture of $N$ Gaussian bases with the distribution $\mathcal{N}(\vect{x}|\vect{\mu}_k,\vect{\Sigma}_k)$, where $\vect{\mu}_k\in\real^2$ is the mean and $\vect{\Sigma}_k\in\real^{2\times 2}$ is the covariance matrix for $k\in\mathcal{K}=\{1,\cdots,N\}$. Let $z\in\real$ be the indicator which indicates the variable $\vect{x}$ belongs to $k^\text{th}$ Gaussian basis when $z=k$. The variable $z$ is not observed so $z$ is also called hidden variable or latent variable. The probability of drawing a variable from the $k^\text{th}$ Gaussian basis is denoted as $\pi_k :=\Pr(z=k)$. The distribution of $\vect{x}$ given the $k^\text{th}$ mixture basis is Gaussian, i.e., $\hat{p}(\vect{x}|z=k)=\mathcal{N}(\vect{x}|\vect{\mu}_k,\vect{\Sigma}_k)
$. Therefore, the marginal probability distribution for $\vect{x}$ is given by
\begin{align}
\hat{p}(\vect{x})=\sum\nolimits_{k=1}^{N}\pi_k\,\mathcal{N}(\vect{x}|\vect{\mu}_k,\vect{\Sigma}_k)
\end{align}
The parameters that should be determined to obtain the estimate $\hat{p}(\vect{x})$ are the set $\{{\pi}_k,\vect{\mu}_k,\vect{\Sigma}_k\}_{k=1}^N$. Next, we employ the EM algorithm to obtain these parameters \cite{DAP-LNM-RDB:77}.

The EM algorithm obtains the maximum likelihood estimates of $\{\pi_k,\vect{\mu}_k,\vect{\Sigma}_k\}_{k=1}^N$ given $M$ independent detected targets' positions $\{\vect{x}^n_{\textup{t}}\}_{n=1}^M$. It is an iterative method that alternates between an expectation (E) step and a maximization (M) step. Given a detected target $\vect{x}^n_t$, $n\in\{1,\cdots,M\}$, E-step computes the posterior probability
\begin{align}\label{eq::gamma}
    \gamma_{kn}:=\, \Pr(z=k|\vect{x}^n_t)
    =\frac{\pi_k\,\mathcal{N}(\vect{x}^n_t|\vect{\mu}_k,\vect{\Sigma}_k)}{\sum_{j=1}^N \pi_j\,\mathcal{N}(\vect{x}^n_t|\vect{\mu}_j,\vect{\Sigma}_j)},
\end{align}
using the current value of $\{\pi_k,\vect{\mu}_k,\vect{\Sigma}_k\}_{k=1}^N$. 
Then, M-step updates the parameter set $\{\pi_k,\vect{\mu}_k,\vect{\Sigma}_k\}_{k=1}^N$ by the following equations using the current $\gamma_{kn}$:
\begin{subequations}\label{eq::EM_k}
\begin{align}
   \pi_k=&\frac{\sum_{n=1}^M\gamma_{kn}}{M},\label{eq::pi}\\
    \vect{\mu}_k=&\frac{\sum_{n=1}^M\gamma_{kn}\vect{x}^n_t}{\sum_{n=1}^M\gamma_{kn}},\label{eq::mu}  \\
    \vect{\Sigma}_k=&\frac{\sum_{n=1}^M\gamma_{kn}(\vect{x}^n_t-\vect{\mu}_k)(\vect{x}^n_t-\vect{\mu}_k)^\top}{\sum_{n=1}^M\gamma_{kn}},\label{eq::Sigma}
\end{align}
\end{subequations}
for $k\in\{1,\cdots,N\}$.  M-step needs the global information to update the parameter set $\{\pi_k,\vect{\mu}_k,\vect{\Sigma}_k\}_{k=1}^N$ because the summations in \eqref{eq::EM_k} are over all detected targets $n\in\{1,\cdots,M\}$. However, the information of the targets' positions $\{\vect{x}^n_t\}_{n=1}^M$ is distributed among the active agents $\VV_{\textup{a}}$. We observe that the right hand side quantities of~\eqref{eq::EM_k} are in the form of (weighted) average.  Therefore, we propose a distributed implementation of the EM algorithm, which invokes a set of active weighted average consensus algorithms such that all the agents, $\VV=\VV_{\textup{a}}\cup\VV_{\textup{s}}$ obtain an approximate value of~\eqref{eq::EM_k} by locally exchanging the information with their neighbors. Suppose each agent $i\in\VV$ maintains a local copy of the parameter set of the Gaussian bases $\{\pi_k^i,\vect{\mu}_k^i,\vect{\Sigma}_k^i\}_{k=1}^N$. At the E-step, every active agent $i\in\VV_
\textup{a}$ computes $\gamma_{kn}$ for $k\in\{1,\cdots,N\}$ and $n\in\VV^i_t$ where $\VV^i_t$ is the set of targets detected by active agent $i\in\VV_{\textup{a}}$. Then,
in the M-step, every agent $i\in\VV$ executes Consensus Algorithm \ref{alg::consensus} with proper setting its weight $\eta^i$ and  reference $\vect{\mathsf{r}}^i$ to estimate the update of $\{\pi_k^i,\vect{\mu}_k^i,\vect{\Sigma}_k^i\}_{k=1}^N$. 
It is clear that by setting $\eta^i=|\VV^i_t|$ and $\vect{\mathsf{r}}^i=\frac{\sum_{n\in\VV^i_t}\gamma_{kn}}{|\VV^i_t|}$ if $i\in\VV_{\textup{a}}$, otherwise, $\eta^i=0$ and $\vect{\mathsf{r}}^i=0$, the consensus variable $\vect{y}^i$ in Algorithm~\ref{alg::consensus} asymptotically converges to~\eqref{eq::pi}. Similarly, letting $\eta^i=\sum_{n\in\VV^i_t}\gamma_{kn}$ and $\vect{\mathsf{r}}^i=\frac{\sum_{n\in\VV^i_t}\gamma_{kn}\vect{x}_n}{\sum_{n\in\VV^i_t}\gamma_{kn}}$ if $i\in\VV_{\textup{a}}$, otherwise, $\eta^i=0$ and $\vect{\mathsf{r}}^i=0$, $\vect{y}^i$ converges to \eqref{eq::mu}; letting $\eta^i=\sum_{n\in\VV^i_t}\gamma_{kn}$ and $\vect{\mathsf{r}}^i=\frac{\sum_{n\in\VV^i_t}\gamma_{kn}(\vect{x}_n-\vect{\mu}^i_k)(\vect{x}_n-\vect{\mu}^i_k)^\top}{\sum_{n\in\VV^i_t}\gamma_{kn}}$ if $i\in\VV_{\textup{a}}$, otherwise, $\eta^i=0$ and $\vect{\mathsf{r}}^i=0$, $\vect{y}^i$ converges to \eqref{eq::Sigma}. 
The proposed consensus based distributed EM algorithm is summarized in Algorithm \ref{alg::distributed_EM}.

\begin{algorithm}[t]
\caption{{\small Consensus-based distributed EM algorithm for GMM}}\label{alg::distributed_EM}
\footnotesize
\begin{algorithmic}
\Require Detected targets set $\{\vect{x}^n_t\}_{n\in \VV^i_t}$ by agent $i$, number of Gaussian bases $N$, number of loops $T$\\
\textbf{Initialization:} $\{\pi_k^i,z^i_{\pi,k},v^i_{\pi,k}\}_{k=1}^N$,$\{\vect{\mu}_k^i,\vect{z}^i_{\mu,k},\vect{v}^i_{\mu,k}\}_{k=1}^N$,\\$\{\vect{\Sigma}_k^i,\vect{z}^i_{\Sigma,k},\vect{v}^i_{\Sigma,k}\}_{k=1}^N$. \\
\For{$t=1:T$}
\If{$i\in\VV_{\textup{a}}$} \Comment{E-step}
\\
Compute $\gamma_{kn}$ in~\eqref{eq::gamma} using the current value of $\{\pi_k^i\,\vect{\mu}_k^i,\vect{\Sigma}_k^i\}$ for $k=\{1,\cdots,N\}$ and $n\in\VV^i_t$.
\EndIf
\For{$k=1:N$}\Comment{M-step}
\If{$i\in\VV_{\textup{a}}$}
\begin{align*}
    [\pi_k^i,z^i_{\pi,k},v^i_{\pi,k}]\gets&\,\, \Con(|\VV^i_t|,\frac{\sum_{n\in\VV^i_t}\gamma_{kn}}{|\VV^i_t|},z^i_{\pi,k},v^i_{\pi,k})\\
    \!\!\!\![\vect{\mu}_k^i,\vect{z}^i_{\mu,k},\vect{v}^i_{\mu,k}]\gets&\,\, \Con(\sum\limits_{n\in\VV^i_t}\!\!\gamma_{kn},\frac{\sum\limits_{n\in\VV^i_t}\!\!\!\gamma_{kn}\vect{x}_n}{\sum\limits_{n\in\VV^i_t}\!\!\!\gamma_{kn}},\vect{z}^i_{\mu,k},\vect{v}^i_{\mu,k})\\
    [\vect{\Sigma}_k^i,\vect{z}^i_{\Sigma,k},\vect{v}^i_{\Sigma,k}]\gets&\\ &\!\!\!\!\!\!\!\!\!\!\!\!\!\!\!\!\!\!\!\!\!\!\!\!\!\!\!\!\!\!\!\!\!\!\!\!\!\!\!\!\!\!\!\!\!\!\!\!\!\Con(\sum_{n\in\VV^i_t}\!\!\gamma_{kn},\frac{\sum\limits_{n\in\VV^i_t}\gamma_{kn}(\vect{x}_n\!-\!\vect{\mu}^t_k)(\vect{x}_n\!-\!\vect{\mu}^t_k)^\top}{\sum\limits_{n\in\VV^i_t}\!\gamma_{kn}},\vect{z}^i_{\Sigma,k},\vect{v}^i_{\Sigma,k})
\end{align*}

\Else
\begin{align*}
    [\pi_k^i,z^i_{\pi,k},v^i_{\pi,k}]&\gets \Con(0,0,z^i_{\pi,k},v^i_{\pi,k})\\
    [\vect{\mu}_k^i,\vect{z}^i_{\mu,k},\vect{v}^i_{\mu,k}]&\gets \Con(0,0,\vect{z}^i_{\mu,k},\vect{v}^i_{\mu,k})\\
        [\vect{\Sigma}_k^i,\vect{z}^i_{\Sigma,k},\vect{v}^i_{\Sigma,k}]&\gets \Con(0,0,\vect{z}^i_{\Sigma,k},\vect{v}^i_{\Sigma,k})
\end{align*}
\EndIf \EndFor \EndFor\\
\Return $\{\pi_k^i,\vect{\mu}_k^i,\vect{\Sigma}_k^i\}_{k=1}^N$
\end{algorithmic}
\end{algorithm}
Because the algorithm is terminated in a finite time. It is expected that $\hat{p}^i(x)$ of each agent $i$ be slightly different than other agents. In what follows we let,
\begin{align}\label{eq::hatpi}
\hat{p}^i(\vect{x})=\sum\nolimits_{k=1}^{N}\pi^i_k\,\mathcal{N}(\vect{x}|\vect{\mu}^i_k,\vect{\Sigma}^i_k),
\end{align}
be the local final estimate of agent $i\in\mathcal{V}$.

\subsection{Numerical demonstration}\label{sec::Stage 1_demonstration}
We demonstrate a numerical simulation to show the performance of the proposed distributed EM algorithm. Consider a group of $6$ mobile agents where $\mathcal{V}_{\textup{a}}=\{1,2,3,6\}$ are the active agents that monitor the targets to enable the service agents $\mathcal{V}_{\textup{s}}=\mathcal{V}=\{1,2,3,4,5,6\}$ to obtain an estimate of the density distribution ${p}(x)$ of 
 a group of $M=1000$ targets. The agents $\mathcal{V}$ communicate over a connected ring graph whose adjacency matrix is
\begin{align*}\vect{\sf{A}}=\left[\begin{smallmatrix}
0 & -1 & 0 & 0 & 0 & -1 \\
-1 & 0 & -1  &  0 & 0 & 0\\
0 & -1 & 0 & -1 & 0 & -1 \\
0 & 0 & -1 & 0 & -1 & 0 \\
0 & 0 &  0 & -1 & 0 & -1 \\
-1 & 0 & -1 & 0 & -1 & 0 \end{smallmatrix}\right].
\end{align*}
The numbers of the targets detected by agent $i\in\VV_{\textup{a}}$ are $|\VV^1_{\textup{t}}|=100$, $|\VV^2_{\textup{t}}|=250$, $|\VV^3_{\textup{t}}|=450$, and $|\VV^6_{\textup{t}}|=200$. The agents, both active and service, execute the distributed EM Algorithm~\ref{alg::distributed_EM} to estimate the parameters of the Gaussian bases of GMM for the target density distribution. In the simulation, the number of the iteration-loops of the consensus algorithm and EM algorithm are $L=20$ and $T=50$, respectively. 
Agents' estimation results are illustrated in Fig. \ref{fig::ConsensusEMAgent} where the black circle's represent the targets, the elliptic footprints are the 3-$\sigma$ uncertainty ellipses of the $6$ Gaussian bases $\mathcal{N}(\vect{x}|\vect{\mu}_k^i,\vect{\Sigma}_k^i), k\in\{1,\cdots,6\}$ estimated by agent $i$ and the thickness of the elliptic footprint represents $\pi_k^i$. The result shows that with the proposed distributed EM algorithm, the agents successfully estimate the parameter of GMM for the target's distribution though agent $4$ and $5$ do not detect any target.
We note that with the help of the consensus algorithm all agents get the approximately same estimation~results.


\begin{figure}
  \centering
  \includegraphics[width=0.38\textwidth]{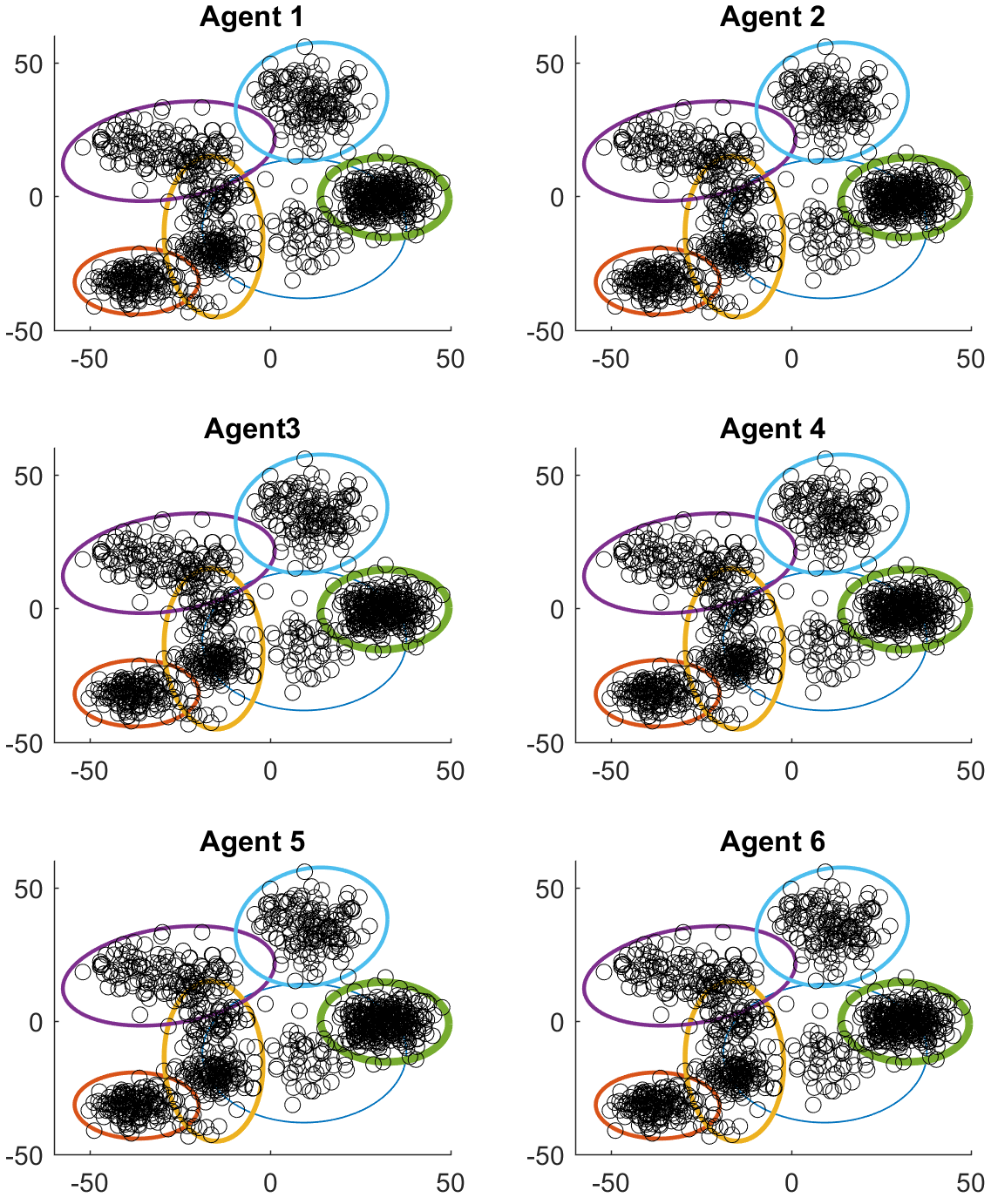}
  \caption{{\small The estimate of GMM of each agents of the demonstration in section \ref{sec::Stage 1_demonstration}.}}
  \label{fig::ConsensusEMAgent}
\end{figure}


\section{Stage 2: Distributed Deployment of Service Agents}\label{sec::stage 2}
In stage 1, the target density distribution is modeled and estimated by a GMM. The result of GMM intrinsically partitions the area into a set of subregions each of which represents a Gaussian basis.
Our suboptimal solution to the deployment problem~\eqref{eq::objective} is to deploy each service agent $i\in\VV_{\textup{s}}=\{1,\cdots,N\}$ to optimally cover an assigned subregion $k\in\mathcal{K}=\{1,\cdots,N\}$. The service agent assignment is based on the similarity  of the agent's QoS distribution, $\omega_{\textup{s}}^i \mathcal{N}(\vect{x}|\vect{x}^i_{\textup{s}},\Sigma(\theta^i_{\textup{s}}))$, to the  Gaussian basis subregion, $\pi_k^i \mathcal{N}(\vect{x}|\vect{\mu}_k^i,\vect{\Sigma}_k^i)$, such that the summation of the KLD of each assigned agent-subregion pair is minimized. This objective can be formalized as follows. For any service agent $i\in\VV_{\textup{s}}$ let 
\begin{align}\label{eq::cost_KLD}
&C_{ik}(\vect{x}_{\textup{s}}^i,\theta_{\textup{s}}^i)=\KLD\big(\pi_k^i \mathcal{N}(\vect{x}|\vect{\mu}_k^i,\vect{\Sigma}_k^i)||\omega_{\textup{s}}^i \mathcal{N}(\vect{x}|\vect{x}^i_{\textup{s}},\Sigma(\theta^i_{\textup{s}}))\nonumber\\
=&\pi_k^i\Big(\ln\frac{\pi_k^i}{\omega_{\textup{s}}^i}+\KLD\big( \mathcal{N}(\vect{x}|\vect{\mu}_k^i,\vect{\Sigma}_k^i)|| \mathcal{N}(\vect{x}|\vect{x}^i_{\textup{s}},\Sigma(\theta^i_{\textup{s}}))\big)\Big),
\end{align}
for $k\!\in\!\mathcal{K}$. 
We note that $C_{ik}$ in~\eqref{eq::cost_KLD} is a continuous function of the service agent's pose $(\vect{x}_{\textup{s}}^i,\theta_{\textup{s}}^i)$. We introduce a binary decision variable $Z_{ik}\in\{0,1\}$, which is $1$ if agent $i$ is assigned to region $k$ and $0$ otherwise. With the right notation at hand then, our suboptimal deployment solution is given~by
\begin{align}\label{eq::sub_op}
\{\vect{x}_{\textup{s}}^{i^\star},&\theta_{\textup{s}}^{i^\star},\{Z_{ik}^\star\}_{k\in\mathcal{K}}\}_{i\in \VV_s}= \arg\min \sum_{i\in\mathcal{V}_{\textup{s}}}\sum_{k\in\mathcal{K}}C_{ik}(\vect{x}_{\textup{s}}^i,\theta_{\textup{s}}^i)Z_{ik},~~\\
    &Z_{ik}\in\{0,1\},\quad i\in\VV_{\textup{s}},~k\in\mathcal{K},\nonumber\\
     &\sum_{k\in\mathcal{K}}Z_{ik}=1,\qquad \forall i\in\VV_{\textup{s}},\nonumber\\
    &\sum_{i\in\VV_{\textup{s}}}Z_{ik}=1, \qquad \forall k\in\mathcal{K}.\nonumber
\end{align}
Next, we introduce a set of manipulations that allows us to arrive at a distributed solution for solving~\eqref{eq::sub_op}. For each service agent $i\in\mathcal{V}_{\textup{s}}$, we start by defining
\begin{align}\label{eq::Cstar}
    C_{ik}^\star=\underset{\vect{x}_{\textup{s}}^i,\theta_{\textup{s}}^i}{\min} C_{ik}(\vect{x}_{\textup{s}}^i,\theta_{\textup{s}}^i),~~k\in\mathcal{K}.
\end{align}
Given~\eqref{eq::Cstar} and the fact that $C_{ik}$ depends on the pose of agent $i$ only, it is straightforward to show that~\eqref{eq::sub_op} can be written in the equivalent form of 
\begin{align}\label{eq::sub_op_eqv}
Z_{ik}^\star&= \arg\min\sum_{i\in\mathcal{V}_{\textup{s}}}\sum_{k\in\mathcal{K}}C^\star_{ik}Z_{ik},~~\\
    &Z_{ik}\in\{0,1\},\quad i\in\VV_{\textup{s}},~k\in\mathcal{K},\nonumber\\
    &\sum_{k\in\mathcal{K}}Z_{ik}=1,\qquad \forall i\in\VV_{\textup{s}},\nonumber\\
    &\sum_{i\in\VV_{\textup{s}}}Z_{ik}=1, \qquad \forall k\in\mathcal{K}.\nonumber
\end{align}
where $(\vect{x}_{\textup{s}}^{i^\star},\theta_{\textup{s}}^{i^\star})$ for each service agent $i\in\mathcal{V}_{\textup{s}}$ is equal to  minimizer $(\vect{x}_{\textup{s}}^{ik^\star},\theta_{\textup{s}}^{ik^\star})$ of the $k$th~\eqref{eq::Cstar} that corresponds to $Z_{ik}^\star=1$.
 The equivalent optimization representation~\eqref{eq::sub_op_eqv} casts our suboptimal service agent assignment problem in the form of a discrete optimal mass transport problem \cite{PG-CM:19}. In this optimal mass transport problem, the minimum value of~\eqref{eq::cost_KLD} given in~\eqref{eq::Cstar} can be viewed as the cost of assigning agent $i$ to the $k$th  subregion/basis of the GMM. In Section~\ref{sec::distributed_task_assignment}, we show that the mixed integer programming problem~\eqref{eq::sub_op_eqv}, in fact can be cast as a linear programming in continuous space, and then solved in a distributed manner using existing optimization algorithms. Once each service agent obtains its assigned pose, we transport the agents to their assigned region by a finite-time minimum energy control. In what follows, before presenting our equivalent linear programming representation of~\eqref{eq::sub_op_eqv}, we discuss how we can obtain the minimizers of~\eqref{eq::Cstar}. More specifically, in Section~\ref{sec::stage 2_Gaussian} we show that the minimum value for each $C_{ik}$ happens at $\vect{x}_{\textup{s}}^{ik^\star}=\vect{\mu}_k$ and the orientation $\theta_{\textup{s}}^{ik^\star}$ that makes the principal axis of the uncertainty ellipses of the service distribution and the corresponding Gaussian distribution are in parallel.

\subsection{Similarity assessment for Gaussian QoS distribution to Gaussian basis subregion}\label{sec::stage 2_Gaussian}
Given the QoS distribution provided by agent $i\in\VV_{\textup{s}}$ to be $\omega_{\textup{s}}^i\mathcal{N}(\vect{x}|\vect{\vect{x}}_{\textup{s}}^i,\vect{\Sigma}^i(\theta_{\textup{s}}^i))$, where the mean of the Gaussian distribution is at the agent's location $\vect{x}_{\textup{s}}^i$ and the covariance matrix is with principal (major) axis at angle $\theta_{\textup{s}}^i$, see Fig.~\ref{fig::Gaussian_axis}. Hence, the covariance matrix can be decomposed into $\vect{\Sigma}^i(\theta_{\textup{s}}^i)=\vect{R}(\theta_{\textup{s}}^i)\vect{\Lambda}^i\vect{R}(\theta_{\textup{s}}^i)^\top$, where $\vect{R}(\theta_{\textup{s}}^i)=
\left[\begin{smallmatrix}
\cos{\theta_{\textup{s}}^i} & -\sin{\theta_{\textup{s}}^i}\\
\sin{\theta_{\textup{s}}^i} & \cos{\theta_{\textup{s}}^i}
\end{smallmatrix}\right]$ and $\vect{\Lambda}^i=\left[\begin{smallmatrix}
\sigma_x^i & 0\\
0 & \sigma_y^i
\end{smallmatrix}\right]$, in which $\sigma_x^i,\sigma_y^i\in\realpositive$ with $\sigma_x^i\geq\sigma_y^i$ are known service parameters  determines the `shape' of the service agent $i$. Similarly, agent $i$'s estimated covariance matrix $\vect{\Sigma}_k^i$, for the $k$th subregion/basis of its estimated $\hat{p}(x)$, see~\eqref{eq::hatpi}, can be written as $\vect{\Sigma}^i_k(\theta_k^i)=\vect{R}(\theta^i_k)\vect{\Lambda}_k^i\vect{R}(\theta^i_k)^\top$, where $\vect{\Lambda}^i_k=\left[\begin{smallmatrix}
\sigma^i_{k,x} & 0\\
0 & \sigma^i_{k,y}
\end{smallmatrix}\right]$, in which $\theta_k^i$ is the angle of principal (major) axis of the covariance matrix and $\sigma_{k,x}^i,\sigma_{k,y}^i\in\realpositive$ with $\sigma_{k,x}^i\geq\sigma_{k,y}^i$ are the variances in the major axis and minor axis direction, respectively, see Fig.~\ref{fig::Gaussian_axis}. 
With the right notation at hand, the theorem below gives a closed-form solution for the minimizer ($\vect{x}_{\textup{s}}^{ik^\star},\theta_{\textup{s}}^{ik^\star}$) of~\eqref{eq::Cstar}.

\begin{figure}
  \centering
  \includegraphics[width=0.35\textwidth]{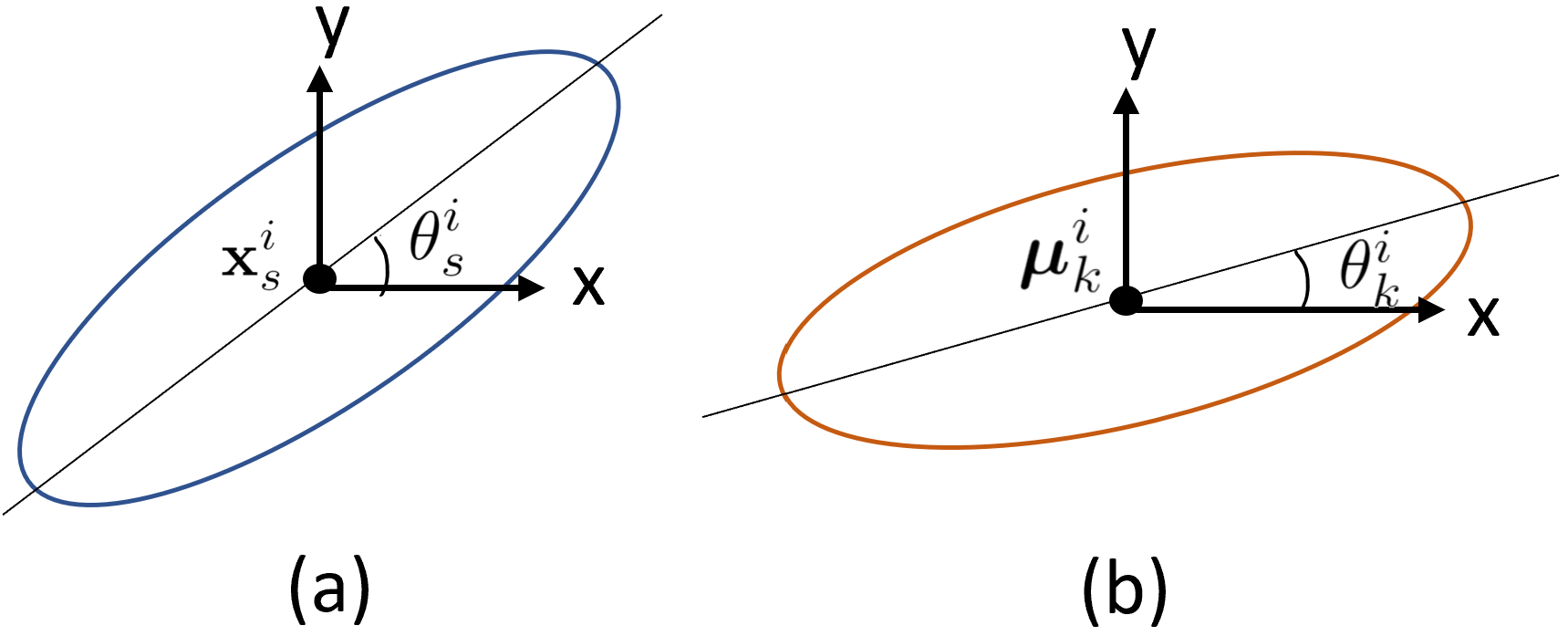}
  \caption{{\small The principal axis angle of (a) agent $i$'s QoS Gaussian distribution and (b) the $k$the subregion/basis of $\hat{p}^i(x)$.}}
  \label{fig::Gaussian_axis}
\end{figure}

\begin{thm}\label{thm::guassian_allocate}
Consider the optimization problem~\eqref{eq::Cstar}.
Then, one of the global minimizer of optimization \eqref{eq::Cstar} is $(\vect{x}_{\textup{s}}^{ik^\star},\theta_{\textup{s}}^{ik^\star})=$ $(\vect{\mu}^i_k,\theta^i_k)$, where $\theta^i_k$ is the angle of the principal axis of $\vect{\Sigma}^i_k$.
Moreover,
\begin{align}\label{eq::minimum_KLD}
  C_{ik}^\star=\pi_k^i\Big(\ln\frac{\pi_k^i}{\omega_{\textup{s}}^i}\!+\!\frac{1}{2}\big(\ln\frac{\sigma_x^i \sigma_y^i}{\sigma_{k,x}^i\, \sigma_{k,y}^i}\!+\!\frac{\sigma_{k,x}^i\sigma_y^i\!+\!\sigma_{k,y}^i\sigma_x^i}{\sigma_x^i \sigma_y^i}\!-2\big)\Big).  
\end{align}
\end{thm}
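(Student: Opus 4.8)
The plan is to use the closed-form KLD for Gaussians~\eqref{eq::close_KLD} to turn~\eqref{eq::Cstar} into an explicit optimization and exploit the rotational parametrization of the covariance matrices to decouple the position variable from the orientation variable. First, since $\pi_k^i>0$ is a fixed mixture weight and $\pi_k^i\ln(\pi_k^i/\omega_{\textup{s}}^i)$ does not depend on $(\vect{x}_{\textup{s}}^i,\theta_{\textup{s}}^i)$, minimizing $C_{ik}$ in~\eqref{eq::cost_KLD} is equivalent to minimizing the Gaussian--Gaussian divergence $\KLD\big(\mathcal{N}(\vect{x}|\vect{\mu}_k^i,\vect{\Sigma}_k^i)||\mathcal{N}(\vect{x}|\vect{x}_{\textup{s}}^i,\vect{\Sigma}(\theta_{\textup{s}}^i))\big)$. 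Plugging this into~\eqref{eq::close_KLD}, the location $\vect{x}_{\textup{s}}^i$ appears only in the quadratic form $(\vect{\mu}_k^i-\vect{x}_{\textup{s}}^i)^\top\vect{\Sigma}(\theta_{\textup{s}}^i)^{-1}(\vect{\mu}_k^i-\vect{x}_{\textup{s}}^i)$, which is nonnegative because $\vect{\Sigma}(\theta_{\textup{s}}^i)^{-1}$ is positive definite and is zero exactly at $\vect{x}_{\textup{s}}^i=\vect{\mu}_k^i$, for every $\theta_{\textup{s}}^i$. This settles $\vect{x}_{\textup{s}}^{ik^\star}=\vect{\mu}_k^i$ and reduces the remaining task to a scalar minimization over $\theta_{\textup{s}}^i$.

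For the orientation, note that the determinants $|\vect{\Sigma}(\theta_{\textup{s}}^i)|=\sigma_x^i\sigma_y^i$ and $|\vect{\Sigma}_k^i|=\sigma_{k,x}^i\sigma_{k,y}^i$ are invariant under the orthogonal conjugation $\vect{\Sigma}(\theta)=\vect{R}(\theta)\vect{\Lambda}^i\vect{R}(\theta)^\top$, so in~\eqref{eq::close_KLD} the only $\theta_{\textup{s}}^i$-dependent term is $\tr\!\big(\vect{\Sigma}(\theta_{\textup{s}}^i)^{-1}\vect{\Sigma}_k^i\big)$. Substituting the eigendecompositions of both covariance matrices, using the cyclic invariance of the trace and the rotation identity $\vect{R}(\theta_{\textup{s}}^i)^\top\vect{R}(\theta_k^i)=\vect{R}(\theta_k^i-\theta_{\textup{s}}^i)$, this term equals $\tr\!\big((\vect{\Lambda}^i)^{-1}\vect{R}(\phi)\vect{\Lambda}_k^i\vect{R}(\phi)^\top\big)$ with $\phi:=\theta_k^i-\theta_{\textup{s}}^i$. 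A direct $2\times 2$ expansion then gives $\tr\!\big((\vect{\Lambda}^i)^{-1}\vect{R}(\phi)\vect{\Lambda}_k^i\vect{R}(\phi)^\top\big)=\tfrac{\sigma_{k,y}^i}{\sigma_x^i}+\tfrac{\sigma_{k,x}^i}{\sigma_y^i}+(\sigma_{k,x}^i-\sigma_{k,y}^i)\big(\tfrac{1}{\sigma_x^i}-\tfrac{1}{\sigma_y^i}\big)\cos^2\phi$, i.e.\ it is an affine function of $\cos^2\phi\in[0,1]$.

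Now I would read off the sign of the slope: the standing ordering $\sigma_{k,x}^i\ge\sigma_{k,y}^i$ and $\sigma_x^i\ge\sigma_y^i$ forces $\sigma_{k,x}^i-\sigma_{k,y}^i\ge 0$ and $\tfrac{1}{\sigma_x^i}-\tfrac{1}{\sigma_y^i}\le 0$, so the coefficient of $\cos^2\phi$ is nonpositive and the trace, hence $C_{ik}$, is minimized when $\cos^2\phi=1$, i.e.\ $\phi\in\{0,\uppi\}$, which corresponds to $\theta_{\textup{s}}^{ik^\star}=\theta_k^i$ modulo the $\uppi$-periodicity of the principal axis (this, together with the degenerate isotropic cases where the slope vanishes, is why the statement claims only \emph{one} of the global minimizers). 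Evaluating the trace at $\phi=0$ yields $\tfrac{\sigma_{k,x}^i}{\sigma_x^i}+\tfrac{\sigma_{k,y}^i}{\sigma_y^i}=\tfrac{\sigma_{k,x}^i\sigma_y^i+\sigma_{k,y}^i\sigma_x^i}{\sigma_x^i\sigma_y^i}$; substituting this, the constant $\ln\tfrac{\sigma_x^i\sigma_y^i}{\sigma_{k,x}^i\sigma_{k,y}^i}$, the vanished quadratic form, and the $-2$ of~\eqref{eq::close_KLD} back into~\eqref{eq::cost_KLD} and restoring the factor $\pi_k^i$ produces~\eqref{eq::minimum_KLD}. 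The only genuine computation is the $2\times 2$ trace expansion and correctly identifying the sign of its $\cos^2\phi$-coefficient via the axis ordering; the rest is bookkeeping, so I expect that expansion to be the main obstacle.
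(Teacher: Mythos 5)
Your proposal is correct and follows essentially the same route as the paper's proof: reduce to the Gaussian--Gaussian KLD via \eqref{eq::close_KLD}, kill the quadratic term by setting $\vect{x}_{\textup{s}}^{i}=\vect{\mu}_k^i$ (valid for every orientation, so the problem decouples), and minimize the trace term as an affine function of $\cos^2\phi$ (the paper writes it equivalently in $\sin^2\bar{\theta}$ with $\bar{\theta}=-\phi$), using the eigenvalue orderings $\sigma_x^i\ge\sigma_y^i$, $\sigma_{k,x}^i\ge\sigma_{k,y}^i$ to fix the sign of the slope. Your $2\times 2$ expansion and the evaluation at $\phi=0$ both check out against \eqref{eq::minimum_KLD}.
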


\begin{proof}
We first note that since $\pi_k^i$ and $\omega_{\textup{s}}^i$ are fixed parameters, \eqref{eq::Cstar} is equivalent to minimize $\KLD\big( \mathcal{N}(\vect{x}|\vect{\mu}_k^i,\vect{\Sigma}_k^i)|| \mathcal{N}(\vect{x}|\vect{\vect{x}}_{\textup{s}}^i,\vect{\Sigma}^i(\theta_{\textup{s}}^i))\big)$. 
Invoking~\eqref{eq::close_KLD} we can write
\begin{align}\label{eq::KLD_twoGaussian}
&\KLD\big( \mathcal{N}(\vect{x}|\vect{\mu}_k^i,\vect{\Sigma}_k^i)|| \mathcal{N}(\vect{x}|\vect{\vect{x}}_{\textup{s}}^i,\vect{\Sigma}^i(\theta_{\textup{s}}^i))\big)\Big)\nonumber\\
&=\frac{1}{2}\big(\underbrace{\ln\frac{|\vect{\Sigma}^i(\theta_{\textup{s}}^i)|}{|\vect{\Sigma}_k^i|}}_{(a)}+\underbrace{(\vect{x}_{\textup{s}}^i-\vect{\mu}_k^i)^\top\vect{\Sigma}^i(\theta_{\textup{s}}^i)^{-1}(\vect{x}_{\textup{s}}^i-\vect{\mu}_k^i)}_{(b)}\nonumber\\
&\qquad+\underbrace{\tr(\vect{\Sigma}^i(\theta_{\textup{s}}^i)^{-1}\vect{\Sigma}_k^i)}_{(c)}-2 \big).
\end{align}
We note that, in~\eqref{eq::KLD_twoGaussian},
\begin{align*}
(a)=\ln\frac{|\vect{R}(\theta_{\textup{s}}^i)\vect{\Lambda}^i\vect{R}^\top(\theta_{\textup{s}}^i)|}{|\vect{R}(\theta_k^i)\vect{\Lambda}_k\vect{R}^\top(\theta_k^i)|}\!\!=\ln\frac{|\vect{\Lambda}^i|}{|\vect{\Lambda}_k|}\!\!=\ln\frac{\sigma_x^i \sigma_y^i}{\sigma_{k,x}^i\, \sigma_{k,y}^i},
\end{align*}
thus $(a)$ is a fix term and does not depend on the decision variable $\theta_{\textup{s}}^{i}$. Next, we note that $(b)$ is the only term in~\eqref{eq::KLD_twoGaussian} that depends on $\vect{\mu}_k^i$ and $\vect{x}_{\textup{s}}^i$. For any value other than ${\vect{x}_{\textup{s}}^i}=\vect{\mu}_k^i$, $(b)$ returns a positive value, which means that the minimum of~\eqref{eq::KLD_twoGaussian} happens at $\vect{x}_{\textup{s}}^{ik^\star}=\vect{\mu}_k^i$.  Lastly, we note that $(c)$ in \eqref{eq::KLD_twoGaussian} reads also as \begin{align*}
    (c)&=\tr(\vect{R}(\theta_{\textup{s}}^i)(\vect{\Lambda}^i)^{-1}\vect{R}(-\theta_{\textup{s}}^i+\theta_k^i)\vect{\Lambda}_k\vect{R}(\theta_k^i))\\
    &=\tr(\vect{R}(\theta_{\textup{s}}^i-\theta_k^i)(\vect{\Lambda}^i)^{-1}\vect{R}(-\theta_{\textup{s}}^i+\theta_k^i)\vect{\Lambda}_k).
\end{align*}
Now, let $\bar{\theta}=\theta_{\textup{s}}^i-\theta_k^i$, $\text{s}\bar{\theta}=\sin({\bar{\theta}})$ and $\text{c}\bar{\theta}=\cos({\bar{\theta}})$. Then, we can write (c) as
\begin{align*}
    (c)&=\tr\Big(\begin{bmatrix}
\text{c}\bar{\theta} & -\text{s}\bar{\theta}\\
\text{s}\bar{\theta} & \text{c}\bar{\theta}
\end{bmatrix}
\begin{bmatrix}
\frac{1}{\sigma_x^i} & 0\\
0 & \frac{1}{\sigma_y^i}
\end{bmatrix}
\begin{bmatrix}
\text{c}\bar{\theta} & \text{s}\bar{\theta}\\
-\text{s}\bar{\theta} & \text{c}\bar{\theta}
\end{bmatrix}
\begin{bmatrix}
\sigma_{k,x}^i & 0\\
0 & \sigma_{k,y}^i
\end{bmatrix}\Big)\\
&=\frac{(\sigma_{k,x}^i\sigma_y^i+\sigma_{k,y}^i\sigma_x^i)\text{c}^2\bar{\theta}+(\sigma_{k,x}^i\sigma_x^i+\sigma_{k,y}^i\sigma_y^i)\text{s}^2\bar{\theta}}{\sigma_x^i\sigma_y^i}.
\end{align*}
Let $\alpha=\sigma_{k,x}^i\sigma_y^i+\sigma_{k,y}^i\sigma_x^i$ and $\beta=\sigma_{k,x}^i\sigma_x^i+\sigma_{k,y}^i\sigma_y^i$. Then, $(c)$  reduces 
\begin{align*}
    (c)=\frac{\alpha+(\beta-\alpha)\text{s}^2\bar{\theta}}{\sigma_x^i\sigma_y^i}.
\end{align*}
Because $\sigma_{k,x}^i\geq\sigma_{k,y}^i$ and $\sigma_x^i\geq\sigma_y^i$, we have $\beta\geq\alpha$ and $(\beta-\alpha)\text{s}^2\bar{\theta}$ is non-negative. Hence, the global minimum of $(c)$ is $\frac{\alpha}{\sigma_x^i\sigma_y^i}=\frac{\sigma_{k,x}^i\sigma_y^i+\sigma_{k,y}^i\sigma_x^i}{\sigma_x^i \sigma_y^i}$ which happens at $\bar{\theta}^\star=n\uppi, n\in\{0,1,\cdots\}$, i.e., $\theta_{\textup{s}}^{ik^\star}=\theta_k^i+n\uppi, n\in\{0,1,\cdots\}$. To complete the proof, we note that $n=0$ leads to one of the global minimums $\theta_{\textup{s}}^{ik^\star}=\theta_k^i$.
\end{proof}
Given Theorem~\ref{thm::guassian_allocate}, if the optimization problem~\eqref{eq::sub_op_eqv} allocates service agent $i$ to the $k$th subregion/basis of $\hat{p}^i(x)$, the corresponding final pose of agent $i$ will be  $\vect{x}_{\textup{s}}^{i^\star}=\vect{\mu}_k^i$,  $\theta_{\textup{s}}^{i^\star}=\theta_k^i$.

\subsection{Distributed multi-agent assignment problem}\label{sec::distributed_task_assignment}
The assignment optimization problem~\eqref{eq::sub_op_eqv} is an integer optimization problem. As it is known in the discrete optimal mass transport literature \cite{PG-CM:19}, by the convex relaxation \cite{BDP:98}, the integer optimization~\eqref{eq::sub_op_eqv} can be transferred to the linear programming problem stated as follows:
\begin{align}\label{eq::linear_programming}
    &\min_{Z_{ik}\geq 0}\sum_{i\in\VV_{\textup{s}}}\sum_{k\in\mathcal{K}} C_{ik}^\star Z_{ik}\\
    &\textup{s.t.}\qquad \sum_{k\in\mathcal{K}}Z_{ik}=1,\qquad \forall i\in\VV_{\textup{s}},\nonumber\\
    &\qquad\quad \sum_{i\in\VV_{\textup{s}}}Z_{ik}=1, \qquad \forall k\in\mathcal{K}.\nonumber
\end{align}
Since only agent $i$ knows its own cost $C_{ik}^\star$ for $k\in\mathcal{K}$, we are interested in solving optimization problem \eqref{eq::linear_programming} in a distributed way. In general, problem \eqref{eq::linear_programming} may exist several optimal solutions ${Z_{ik}}^\star$. We also require the agents agreed on the same optimal assignment plan. A distributed simplex algorithm proposed by \cite{BM-NG-BF-AF:12} can achieve this aim. We rewrite \eqref{eq::linear_programming} to the standard form of linear programming
\begin{align}\label{eq::linear_programming_standard}
    &\min_{\vect{Z}} \vect{C}^{\star^T}\vect{Z}\\
    &\text{s.t.} \quad \vect{A}\vect{Z}=\vect{b},\quad \vect{Z}\geq 0.\nonumber
\end{align}
where $\vect{b}=\vect{1}_{2N}$,
\begin{align*}
    \vect{Z}=[Z_{11},\cdots,Z_{1N},Z_{21},\cdots,Z_{2,N},\cdots,Z_{N1}\cdots,Z_{NN}]^\top,\\
    \vect{C}^\star=[C_{11}^\star,\cdots,C_{1N}^\star,C_{21}^\star,\cdots,C_{2,N}^\star,\cdots,C_{N1}^\star\cdots,C_{NN}^\star]^\top,\\
    \vect{A}=[\vect{A}_{11},\cdots,\vect{A}_{1N},\vect{A}_{21},\cdots,\vect{A}_{2,N},\cdots,\vect{A}_{N1}\cdots,\vect{A}_{NN}],
\end{align*}
in which, $\vect{A}_{ik}\in\real^{2N}$ is a column vector with $i$-th and $(N+k)$-th entries are $1$, and others are $0$. A column of problem \eqref{eq::linear_programming_standard} is a vector $\vect{h}_{ik}\in\real^{1+2N}$ defined as $\vect{h}_{ik}=[C_{ik}^\star\quad \vect{A}_{ik}^\top]^\top$. The set of all columns is denote by $\mathcal{H}=\{\vect{h}_{ik}\}_{i\in\VV_{\textup{s}},k\in\mathcal{K}}$. Thus, the linear program \eqref{eq::linear_programming_standard} is fully characterized by the pair $(\mathcal{H},\vect{b})$. The information of $\mathcal{H}$ is distributed in the service agents. Let $\mathcal{P}^i=\{\vect{h}_{ik}\}_{k\in\mathcal{K}}$ is the problem column set known by agent $i\in\VV_\textup{s}$, which satisfies $\mathcal{H}=\cup_{i=1}^N \mathcal{P}^i$ and $\mathcal{P}^i\cap\mathcal{P}^j=\emptyset,\forall (i,j)\in\VV_{\textup{s}}$. We assume the communication graph $\mathcal{\GG}_{\textup{s}}(\VV_{\textup{s}},\EE_{\textup{s}})$ of the service agents is connected. Hence the tuple $(\mathcal{\GG}_{\textup{s}},(\mathcal{H},\vect{b}),\{\mathcal{P}^i\}_{i\in\VV_{\textup{s}}})$ forms a distributed linear program that can be solved by the distributed simplex algorithm \cite{BM-NG-BF-AF:12}. The result of the optimization problem \eqref{eq::linear_programming} is the optimal plan  $Z_{ik}^\star$, where $Z_{ik}^\star=1$ means assigning the agent $i$ to the $k$th subregion with the optimal pose ${\vect{x}_{\textup{s}}^i}^\star={\vect{x}_{\textup{s}}^{ik}}^\star$ and ${\theta_{\textup{s}}^i}^\star={\theta_{\textup{s}}^{ik}}^\star$.

The last step in Stage 2 of our deployment solution is agents transportation to their corresponding assigned pose. In practice, local controllers are expected to complete this task. One such local controller can be 
the well-known minimum energy control \cite[page 138]{FL-DV-VS:12} that can transport the agents to their respective assigned pose in finite time $\tau\in\real_{>0}$ while also enabling the agents to save on transportation energy. Let the local dynamics of agent $i$ (linearized) be given by $\dvect{\chi}^i(t)=\vect{A}^i\,\vect{\chi}^i(t)+\vect{B}^i\,\vect{u}^i(t)$, where $\vect{u}^i(t)$ is the control vector, and $\vect{\chi}^i$ is the state vector of agent $i$, which contains the pose and possibly other states. We assume that $(\vect{A}^i,\vect{B}^i)$ is controllable. 
Starting at an initial condition $\vect{\chi}^i(t_0)$, the minimum energy control is given by 
\begin{align}\label{eq::min_energy_control}
      \vect{u}^i(t)=\vect{B}^{i^\top}\! \ee^{\vect{A}^{i^\top}(t_0+ \tau-t)}\vect{G}^{i^{-1}}({\vect{\chi}^i}^\star-&\ee^{\vect{A}^i  \tau}\vect{\chi}^{i}(t_0))
    \end{align}  
     for $t\in[t_0,t_0+ \tau]$, where  $\vect{\chi}^{i\star}$ is agent $i$'s desired state whose pose component is  set to $({\vect{x}_{\textup{s}}^i}^\star, {\theta_{\textup{s}}^i}^\star)$, and $\vect{G}^i$ is the controllability Gramian.
Control \eqref{eq::min_energy_control} is a finite time control that guarantees to drive the agent from it initial state $\vect{\chi}^i(t_0)$
to it desired state ${\vect{\chi}^i}^\star$ 
in finite transportation time $\tau$, i.e., $\vect{\chi}^i(t_0 +\tau)=\vect{\chi}^{i^\star}$. Stage 2 of our distributed deployment is finished at $t_0+\tau$. If the targets are dynamic, our two-stage deployment process can repeat to re-position the service agents in accordance to the changes in targets distribution.

\section{Demonstrations}
\begin{figure}
  \centering
  \includegraphics[width=0.3\textwidth]{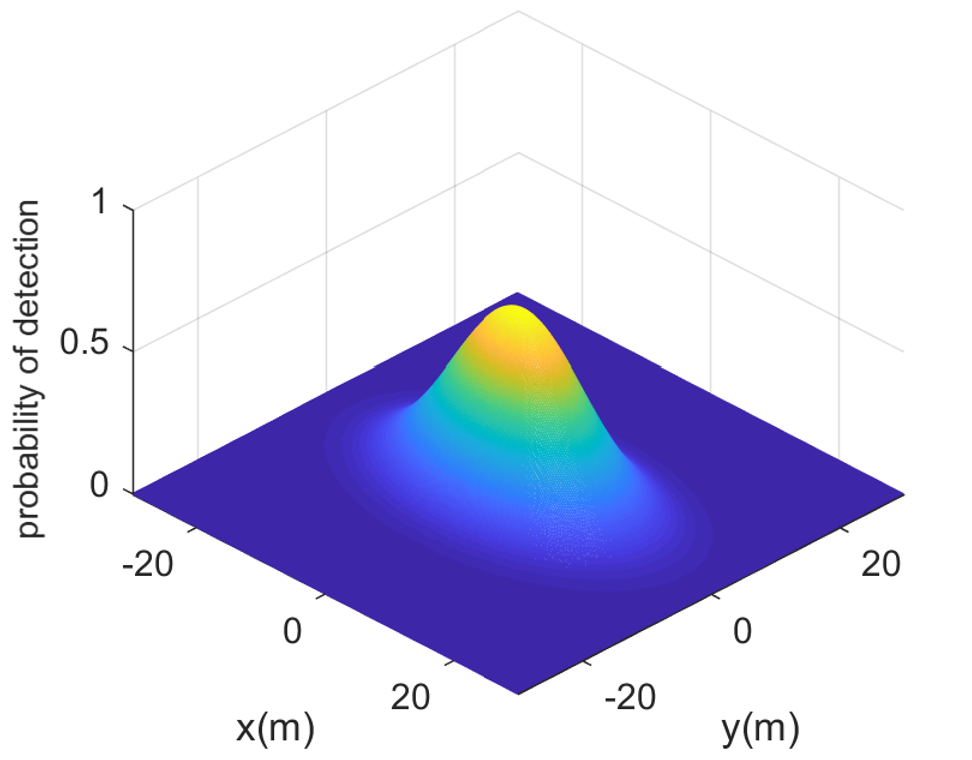}
  \caption{{\small The distribution of QoS of agent $1$}}
  \label{fig::QoS_sensor}
\end{figure}

Consider the the same setting in~Section~\ref{sec::Stage 1_demonstration} where a group of $6$ agents with the connected communication graph $\GG_{\textup{s}}$ used a distributed EM Algorithm~\ref{alg::consensus} to obtain the parameters $\{\pi_k^i\,\vect{\mu}_k^i,\vect{\Sigma}_k^i\}_{k=1}^6$ of the Gaussian bases of the targets distribution. The $6$ agents are equipped with a wireless sensor which is used to detect events of interest that occurred with targets. A commonly used sensor model is a probabilistic function conditioned on the sensor location and the event location~\cite{BF-FT-DWHF:03,XH-CR-XD:17}, i.e.,$\Pr(\text{Detected}|\vect{x}_{\textup{s}}^i,\vect{x}_t)$. For example in~\cite{XH-CR-XD:17}, given a sensor location at $\vect{x}_{\textup{s}}^i,i\in\VV_{\textup{s}}$ and an event happening at $\vect{x}_t$, the probability of the sensor detecting the event is expressed as
$$
\Pr(\text{Detected}|\vect{x}_{\textup{s}}^i,\vect{x}_t) = \beta^i\textup{e}^{-\alpha^i\frac{(\vect{x}_{\textup{s}}^i-\vect{x}_t)^\top(\vect{x}_{\textup{s}}^i-\vect{x}_t)}{{\gamma^i}^2}}, $$
where $\alpha^i,\beta^i,\gamma^i$ are sensor $i$'s parameters. In this case, the QoS of the sensor $i$ at location $\vect{x}_{\textup{s}}^i$ over the 2-D space $\vect{x}\in\real^2$ can be defined as $Q(\vect{x}|\vect{x}_{\textup{s}}^i)=\Pr(\text{Detected}|\vect{x}_{\textup{s}}^i,\vect{x})=z^i\mathcal{N}(\vect{x}|\vect{x}_{\textup{s}}^i,\vect{\Lambda}^i)$, where $z^i=\sqrt{2\pi|\vect{\Lambda}^i|}\beta^i$, $\vect{\Lambda}^i=\left[\begin{smallmatrix}
\sigma^i & 0\\
0 & \sigma^i
\end{smallmatrix}\right]$, $\sigma^i=\frac{{\gamma^i}^2}{2\alpha^i}$.
In this example, we consider a more general sensor model with anisotropic sensory capability, i.e. 
QoS is  $Q(\vect{x}|\vect{x}_{\textup{s}}^i,\theta_{\textup{s}}^i)=z^i\mathcal{N}(\vect{x}|\vect{x}_{\textup{s}}^i,\vect{\Sigma}^i(\theta_{\textup{s}}^i))$, 
with  $\vect{\Sigma}^i({\theta_{\textup{s}}^i})=\vect{R}(\theta_{\textup{s}}^i)\vect{\Lambda}^i\vect{R}^\top(\theta_{\textup{s}}^i)$, $\vect{\Lambda}^i=\left[\begin{smallmatrix}
\sigma_x^i & 0\\
0 & \sigma_y^i
\end{smallmatrix}\right]$, and $\theta_{\textup{s}}^i$ is the orientation of sensor $i$. 
Lastly, the collective density distribution of QoS provided by the 6 sensors is  $$q(\vect{x}|\{\vect{x}^i_{\textup{s}},\vect{\Sigma}^i(\theta_{\textup{s}}^i)\}_{i\in\VV_{\textup{s}}})=\sum_{i\in\VV_{\textup{s}}}\omega_{\textup{s}}^i \mathcal{N}(\vect{x}|\vect{x}_{\textup{s}}^i,\vect{\Sigma}^i(\theta_{\textup{s}}^i))$$ where $\omega^i_{\textup{s}}=\frac{z^i}{\sum_{i=1}^N z^i}$.
We set $\omega_{\textup{s}}^1=0.15,\sigma_x^1=70,\sigma_y^1=25$, $\omega_{\textup{s}}^2=0.15,\sigma_x^2=30,\sigma_y^2=15$, $\omega_{\textup{s}}^3=0.2,\sigma_x^3=80,\sigma_y^1=30$, $\omega_{\textup{s}}^4=0.1,\sigma_x^4=30,\sigma_y^4=30$, $\omega_{\textup{s}}^5=0.1,\sigma_x^5=60,\sigma_y^5=40$, and
$\omega_{\textup{s}}^6=0.3,\sigma_x^6=30,\sigma_y^6=30$.  See Fig.~\ref{fig::QoS_sensor} for a graphical demonstration of QoS of agent $1$. 

Each agent $i\in\VV_{\textup{s}}$ evaluates its costs $C_{ik}^\star$ for all $k\in\mathcal{K}$ by \eqref{eq::minimum_KLD}. Then, the agents cooperatively solve the distributed multi-agent assignment problem \eqref{eq::linear_programming_standard} by the means of distributed simplex algorithm \cite{BM-NG-BF-AF:12}. The optimal assignment plan of \eqref{eq::linear_programming_standard} is $Z_{14}^\star=1$, $Z_{22}^\star=1$, $Z_{33}^\star=1$, $Z_{41}^\star=1$, $Z_{56}^\star=1$, and $Z_{65}^\star=1$. 

Suppose the service agents $i\in\VV_{\textup{s}},$ motion is described by a unicycle dynamics
\begin{align}
    \dot{x}_{\textup{s},x}^i=v^i\cos{\theta}_{\textup{s}}^i,\quad
    \dot{x}_{\textup{s},y}^i=v^i\sin{\theta}_{\textup{s}}^i,\quad
    \dot{\theta}_{\textup{s}}^i&=\omega^i,
\end{align}
where $[x_{\textup{s},x}^i \quad x_{\textup{s},y}^i]^\top=\vect{x}_{\textup{s}}^i$, $v^i\in\real$ and $\omega^i\in\real$ are are linear velocity and angular velocity of each agent $i$, respectively. The agents execute the feedback linearization procedure \cite{GO-ADL-MV:02} to achieve the equivalent linear dynamics
\begin{align}
    \left[\begin{smallmatrix}\dot{\chi}^i_1\\ \dot{\chi}^i_2\\ \dot{\chi}^i_3\\ \dot{\chi}^i_4 \end{smallmatrix}\right]&=\underbrace{\left[\begin{smallmatrix}0&1 &0&0\\0&0&0&0\\0&0&0&1\\0&0&0&0 \end{smallmatrix}\right]}_{\vect{A}} \left[\begin{smallmatrix}\chi^i_1\\ \chi^i_2\\ \chi^i_3\\ \chi^i_4 \end{smallmatrix}\right]+\underbrace{\left[\begin{smallmatrix}0&0\\1&0\\0&0\\0&1\end{smallmatrix}\right]}_{\vect{B}}\left[\begin{smallmatrix}u^i_1\\u^i_2\end{smallmatrix}\right],
    \end{align}
    by the change of variables
    $\chi^i_1=x_{\textup{s},x}^i$, $\chi^i_2=v^i\cos{\theta}_{\textup{s}}^i$, $\chi^i_3=x_{\textup{s},y}^i$, $\chi^i_4=v^i\sin{\theta}_{\textup{s}}^i$
and the compensator
\begin{align}
\begin{split}
  \dot{v}^i&=u^i_1\cos{\theta}_{\textup{s}}^i +u^i_2\sin{\theta}_{\textup{s}}^i,\\
  \omega^i&=\frac{u^i_2\cos{\theta}_{\textup{s}}^i-u^i_1\sin{\theta}_{\textup{s}}^i}{v^i}.
\end{split}
\end{align}
Finally, control \eqref{eq::min_energy_control} is applied to to drive the agents to their desired state $\vect{\chi}^{i^\star}=[x_{\textup{s},x}^{i^\star} \quad v^{i^\star}\cos{\theta_{\textup{s}}^{i^\star}} \quad x_{\textup{s},y}^{i^\star} \quad v^{i^\star}\sin{\theta_{\textup{s}}^{i^\star}}]^\top$ corresponding to their assigned subregions and the optimal pose, i.e., $\vect{x}_{\textup{s}}^{i^\star}=\vect{\mu}_k^i$ and $\theta_{\textup{s}}^{i^\star}=\theta_k^i$ if $Z_{ik}^\star=1$, where $v^{i^\star}>0$ is the arrival velocity which we can assign. 

The density distribution of QoS provided by the $6$ agents (sensors) after deployment is illustrated in Fig.~\ref{fig::sensor_QoS_distribution}, where the black circles are the targets, the colored dots represent the 6 agents and the blue color map is the collective QoS distribution. The darker blue indicates the better QoS.  We can see that the collective QoS distribution is similar to the targets’ distribution. Hence, with the proposed two-stage deployment strategy, the distribution of QoS efficiently covers the targets.




\begin{figure}
  \centering
  \includegraphics[width=0.35\textwidth]{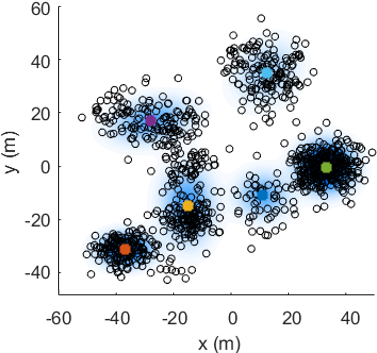}
  \caption{{\small The QoS distribution provided by the 6 sensors.}}
  \label{fig::sensor_QoS_distribution}
\end{figure}

\section{Conclusion}\label{sec::conclusion}
We proposed a two-stage distributed deployment solution for mobile agents to efficiently cover a group of dense targets with their service, i.e., the resulting collective QoS distribution is as much similar to as possible to the target density distribution.
In the first stage, we proposed the distributed consensus-based EM algorithm to enable all agents to estimate the target density distribution by a GMM.
In the second stage, we first assessed the smallest KLD of each pair of agent's Gaussian QoS distribution and Gaussian basis subregion. 
Then, we formulated a distributed multi-agent assignment problem to allocate each service agent to a subregion by taking the smallest KLD between the agent and the subregion as the assignment cost. As a result, the summation of the KLD of each assigned agent-subregion pair is minimal.
We illustrated the effectiveness of our proposed deployment solution using an application example for event detection via mobile sensors. 


\bibliographystyle{ieeetr}%
\bibliography{bib/alias,bib/Reference} 

\end{document}